\newif\ifcccg
  \newtheorem{theorem}{Theorem}
  \newtheorem{lemma}{Lemma}
\setlist{nolistsep}
\newcommand\arr{\mathcal A}
\let\eps\varepsilon
\begin{document}
\title{How To Place a Point to Maximize Angles\thanks{This is an extended version of \cite{c3g}.}}
\author{
	Boris Aronov\thanks{
		Research supported by NSF grants CCF-11-17336 and CCF-12-18791.}\\
	aronov@poly.edu
	\and Mark Yagnatinsky\thanks{
		Research supported by GAANN Grant P200A090157 from the US Department of Education and by NSF grant CCF-11-17336.}\\
  myag@cis.poly.edu
}
\date{\ifcccg\normalsize\fi Polytechnic Institute of NYU, Brooklyn, New York}
\maketitle
\begin{abstract}
  We describe a randomized algorithm that, given a set $P$ of points in the plane, computes the best location to insert a new point $p$, such that the Delaunay triangulation of $P\cup\{p\}$ has the largest possible minimum angle.  The expected running time of our algorithm is at most cubic, improving the roughly quartic time of the best previously known algorithm.  It slows down to slightly super-cubic if we also specify a set of non-crossing segments with endpoints in $P$ and insist that the triangulation respect these segments, i.e., is the constrained Delaunay triangulation of the points and segments.
\end{abstract}

\section{Introduction}
The subject of meshing and specifically constructing ``well behaved'' triangulations has been researched extensively \cite{mesh}.  One of the problems addressed in the literature is that of refining or improving an existing mesh by incremental means.  Motivated by this, Aronov \emph{et~al.} \cite{orig} considered the following problem: ``Given a set of points in the plane, where would you place one additional point, so as to maximize the smallest angle in a good triangulation of the point set?''  Since Delaunay triangulations are known to maximize the smallest angle over all possible triangulations with a given vertex set \cite{Delaunay-good}, the question can be rephrased as: ``Given a point set, where do we place an additional point, so as to maximize the minimum angle in the Delaunay triangulation of the resulting set?''  In the rest of the paper we always picture the new point as lying within the convex hull of the existing points, but the algorithm is essentially the same without this assumption.  (Another variant of the problem mentioned in \cite{orig} involved incrementally improving an existing triangulation by ``tweaking'' the position of an existing interior vertex, one at a time, so that, again, the smallest angle is maximized.)  In \cite{orig}, they also discuss the more challenging question of how to position several points in the best possible coordinated way; we do not address this variant of the problem here.

The previous algorithm \cite{orig} for placing an additional point runs in worst-case $O(n^{4+\eps})$ time, for any~$\eps>0$, with the constant of proportionality depending on~$\eps$.  We propose a randomized algorithm whose expected running time is roughly a factor of $n$ lower.  Somewhat surprisingly, Aronov \emph{et al.} considered and rejected the approach we use in this paper \cite[page~96]{orig}.

The algorithm from \cite{orig} actually handles constrained Delaunay
triangulations, where a set of edges that must be present in the
triangulation is provided as part of the input in addition to a point
set.  This allows one to handle, for example, triangulating a simple
polygon.  We can modify our procedure to deal with constraints in
near-cubic time, slightly slower than the unconstrained case.

We present our algorithm, and show that it runs in cubic time, in the following section.  The analyses of this and the precursor algorithm \cite{orig} are misleading in that they reflect situations unlikely to happen for ``reasonable'' inputs.  We discuss how to measure how realistic an input is, and the resulting behavior of both algorithms on realistic inputs in \autoref{sec:realistic}.  In \autoref{sec:cdt}, we extend our algorithm to handle constrained triangulations in $O(n^2\lambda_{16}(n)\log n)$ time, and we conclude in \autoref{sec:concl}.

\section{The Algorithm}
Our algorithm takes a set $P$ of $n$ points in the plane and computes the best location for a new point~$p$, such that the Delaunay triangulation of $P\cup\{p\}$ has the largest possible minimum angle; for ease of presentation we will assume that the points of $P$ are in \emph{general position}, that is no three points of $P$ lie on a line and no four on a circle.  We start by recalling an argument detailed in \cite{orig} which duplicates the insertion step of the standard incremental Delaunay triangulation algorithm \cite{incremental}.  Let $T$ be the Delaunay triangulation of~$P$.  We begin by computing the arrangement $\arr$ induced by the Delaunay circles of $P$, i.e., of the circumcircles of the triangles of $T$.  (If we are to allow placing $p$ outside the convex hull of $P$, we must deal with $n+1$ points: the input points and the point at infinity.  Thus, $T$ really has $2(n-1)$ triangles: the normal ones and the infinite ones, which are really the lines supporting the convex hull of $P$.  This does not materially affect the algorithm, and for simplicity we ignore this possibility throughout the presentation.)  Although there are only a linear number of Delaunay circles, in the worst case every pair of them intersect, so that $\arr$ has quadratic complexity.  We examine how the Delaunay triangulation~$T_p$ of~$P\cup\{p\}$ differs from $T$.  Let $c$ be the face of~$\arr$ containing~$p$.  Recall that a triangle is present in a Delaunay triangulation if and only if its Delaunay disk is empty of vertices.  Point $p$ \emph{invalidates} some triangles of~$T$ by appearing in the interior of the corresponding disks. After we have inserted~$p$, we no longer have a triangulation; instead we have a star-shaped polygonal hole~$H$ in~$T$ containing~$p$; see \autoref{hole} (left).
\begin{figure*}
\centering
\includegraphics[page=3]{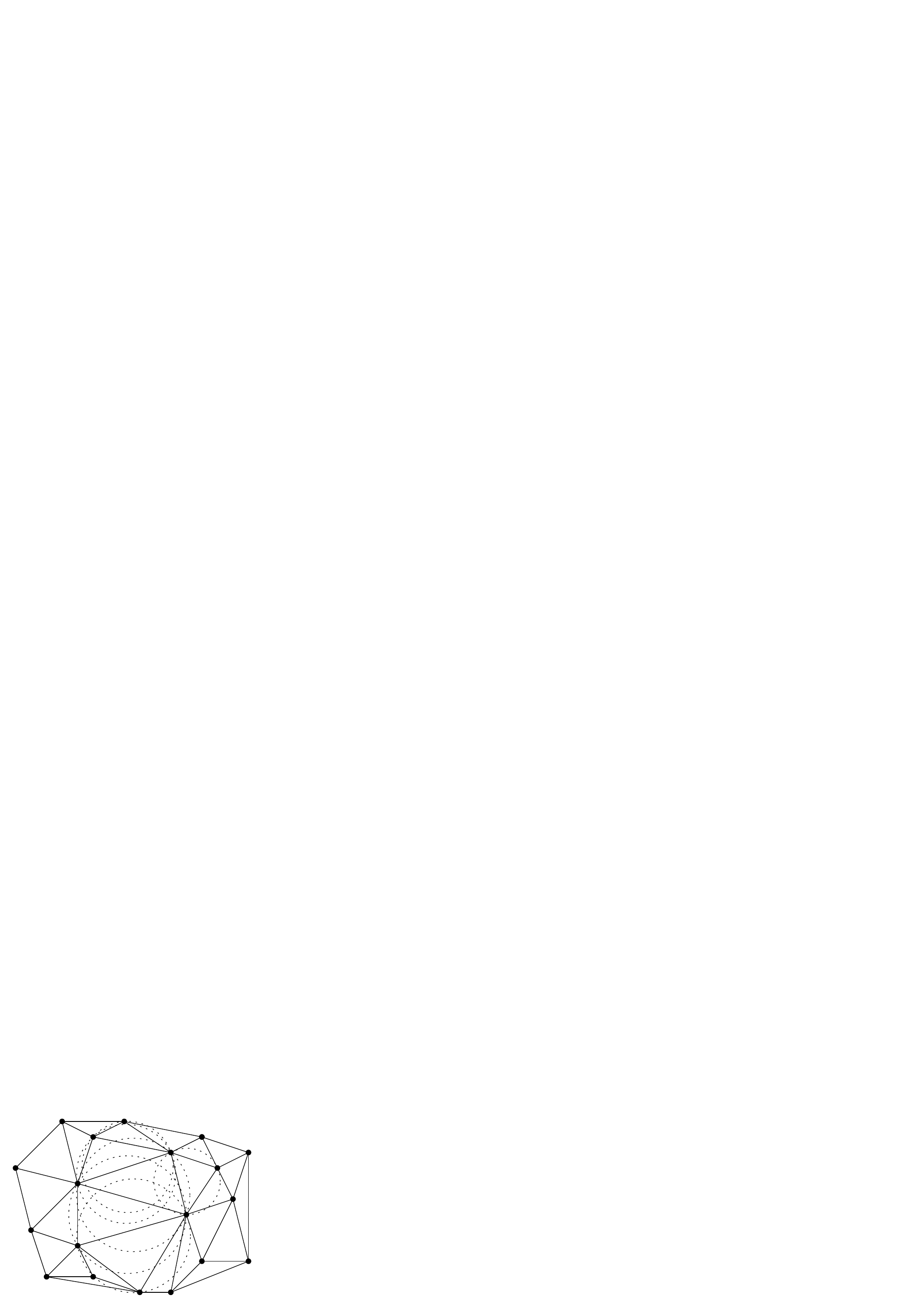}\hfil
\includegraphics[page=5]{points-for-holes}
\caption{\label{hole}The new point $p$ is in the kernel of the shaded star-shaped polygonal hole~$H$.  Removed edges of $T$ are shown dashed (left) and added edges are dotted (right).}
\end{figure*}
Since the insertion of~$p$ only invalidates previously valid triangles, but cannot make an invalid triangle valid (since insertion of $p$ can't turn nonempty disks into empty ones), new edges of~$T_p$ must have~$p$ as an endpoint.  So, connecting $p$ to all vertices of $H$ (\autoref{hole}, right) is the way to complete $T_p$.  This suggests the following algorithm outline:
\begin{enumerate}
\item Compute the Delaunay triangulation $T$.
\item Build the arrangement $\arr$ of Delaunay circles of~$T$.
\item For each of the $O(n^2)$ cells $c \in \arr$:
  \begin{enumerate}
  \item Identify the set of $O(n)$ triangles invalidated by placing $p$ in $c$, the union of which
  	forms the hole~$H$.
  \item Optimize the placement of $p$ in $c$. \label{step}
  \end{enumerate}
\item Return the best placement of $p$ found.
\end{enumerate}
This outline was in fact used in \cite{orig}.  The main contribution of this paper is to use a different approach for \autoref{step}.  Specifically, in \cite{lp}, it was shown that the following is an LP-type problem.\footnote{In \cite{quasi-convex}, this and related problems are presented in a unified framework.}
\begin{quote}
Given a star-shaped polygon $H$, find the point~$p$ in its kernel that maximizes the smallest angle in the triangulation that results by connecting~$p$ to all vertices of~$H$.
\end{quote}
Being an LP-type problem, it can be solved in expected time linear in the number of vertices of $H$, while the approach from \cite{orig}, based on explicitly computing lower envelopes of bivariate functions, takes time roughly quadratic in their number.  However, this LP-type problem is not quite the problem we actually wish to solve, as we need the optimal placement of $p$ \emph{within the current cell~$c$}, which is why this idea was rejected in \cite{orig}.  Fortunately, there is a conceptually simple fix.  In the \emph{region search} stage of our procedure, for each cell~$c$, we run the algorithm from \cite{lp} discarding the result if the returned optimum lies outside~$c$.  A simple argument (see \autoref{one} below) shows that if the solution to the unconstrained problem results in a point not in~$c$, then the optimum within~$c$ must lie on its boundary.  So in a separate \emph{boundary search} step detailed below, we find the best placement of $p$ on any cell boundary.  Combining the results from the two steps we obtain the globally optimal placement for $p$.

\begin{lemma}\label{one}  If the optimal solution to the unconstrained LP-type problem corresponding to cell~$c$ is not in $c$, then the optimal solution for $c$ lies on its boundary.\end{lemma}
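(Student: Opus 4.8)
The plan is to use convexity twice: once for the domain of the optimization and once for the super-level sets of its objective. Fix the cell $c$, let $H$ be the corresponding hole, and write $\ker H$ for its kernel. As recalled above, inserting any $p\in c$ leaves a hole $H$ that is star-shaped as seen from $p$; equivalently $p\in\ker H$, so $c\subseteq\ker H$, and since $\ker H$ is an intersection of half-planes it is closed and convex and hence also contains $\overline c$. By definition the unconstrained LP-type problem for $c$ ranges over $\ker H$, so its optimum $p^*$ lies in $\ker H$ too. Thus $\overline c$ and $p^*$ lie inside one common convex set on which the objective $f$ — the smallest angle of the fan triangulation obtained by joining a point to all vertices of $H$ — is defined and continuous.

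The second ingredient is that $f$ is quasiconcave on $\ker H$: for every $\alpha$ the super-level set $\{p\in\ker H: f(p)\ge\alpha\}$ is convex. This is exactly the property that makes the problem LP-type in the sense of \cite{lp} (see also \cite{quasi-convex}), and it is visible directly: $f(p)\ge\alpha$ is the intersection, over every triangle $p\,v_i\,v_{i+1}$ of the fan and over each of its three angles, of the region where that single angle is at least $\alpha$; the region for the angle at $p$ is a circular segment cut off by an arc through $v_i$ and $v_{i+1}$ (and inside $\ker H$ only one side of the chord $v_iv_{i+1}$ is relevant), while the region for an angle at a polygon vertex is a cone with that apex — each convex within $\ker H$.

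Granting these two facts, the lemma follows from a short argument. Suppose $p^*\notin c$. If $p^*\in\partial c$ there is nothing to prove, so assume $p^*\notin\overline c$. By compactness of $\overline c$ and continuity of $f$, choose $p_c\in\overline c$ maximizing $f$ over $\overline c$; if $p_c$ lies on $\partial c$ we are again done, so suppose $p_c$ lies in the interior of $c$. Let $\sigma$ be the segment from $p_c$ to $p^*$. It lies in the convex set $\ker H$, and since $p^*$ maximizes $f$ over $\ker H\supseteq\overline c$ we have $f(p^*)\ge f(p_c)$; hence quasiconcavity of $f$ restricted to $\sigma$ gives $f\ge\min\{f(p_c),f(p^*)\}=f(p_c)$ at every point of $\sigma$. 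Since $\sigma$ starts in the interior of $c$ and ends outside $\overline c$, it meets $\partial c$ at some point $q$, where $f(q)\ge f(p_c)$ by the previous sentence and $f(q)\le f(p_c)$ because $q\in\overline c$. Therefore $f(q)=f(p_c)$: the optimum over $c$ is attained at the boundary point $q$, as required (so the separate boundary-search step loses nothing).

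I expect the only real work to lie in pinning down the two convexity facts; once they are in place the argument is just the observation that moving from the best in-cell point toward the global optimum cannot decrease $f$ and must cross $\partial c$ on the way out. The statement $c\subseteq\ker H$ is built into the incremental-Delaunay setup already described, and the quasiconcavity of $f$ is the heart of the LP-type formulation of \cite{lp}, so in the final write-up I would cite these rather than redo the inscribed-angle and cone computations. The one point to handle cleanly is the relatively-open nature of a cell together with the borderline case $p^*\in\partial c$, both dispatched by the case split above.
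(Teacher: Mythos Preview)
Your proof is correct and takes essentially the same approach as the paper: both rest on the convexity of the super-level sets $R(x)=\{p:f(p)\ge x\}$ cited from \cite{lp}, after which the conclusion is routine. The paper phrases the finish as ``decrease $x$ from the unconstrained optimum until the growing convex set $R(x)$ first meets $c$, necessarily on $\partial c$,'' whereas you run a segment from $p_c$ to $p^*$ and invoke quasiconcavity along it---two standard and equivalent ways to cash in the same convexity fact.
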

\begin{proof} Consider the locus $R(x)$ of points~$p$ such that every angle in the new triangulation of $H$ is at least $x$.  It was shown in \cite{lp} that $R(x)$ is convex; it is easy to see that it varies continuously with~$x$, when non-empty.  Clearly, $R(x) \subset R(y)$ for $y<x$.  As $x$ decreases from its optimum unconstrained value, $R(x)$ will gradually grow from a single point outside~$c$ and eventually intersect~$c$; as it is connected and changes continuously with~$x$, the first intersection must occur along the boundary of~$c$.\end{proof}

It remains to find the best placement for $p$ on each cell boundary.  A cell boundary has two sides, and we process each separately.  First consider an edge of $\arr$.  For a fixed side of a fixed edge $e$, we know which cell of $\arr$ we are in, and thus the hole~$H$.  If $H$ has $k$ vertices, the triangulation has $3k$ angles.  The measure of each of these angles is a univariate function of the position of~$p$ along the edge.  To maximize the smallest of these functions, we find the maximum of their lower envelope by computing the envelope explicitly.  We will show that the graphs of any pair of these functions intersect at most 16 times.  A well-known result from the theory of Davenport-Schinzel sequences immediately implies that the maximum complexity $E(n)$ of the lower envelope is $\lambda_{16}(n)$, where $\lambda_s(n)$ is the maximum length of a DS$(s,n)$ sequence \cite[section 1.2]{ds}.  The maximum length of a DS sequence grows slowly as a function of $n$ when $s$ is constant: it is $o(n\log^*n)$ for any constant $s$.  (It was recently shown in \cite{tight} that a more precise bound is $n\cdot2^{(1+o(1))\alpha(n)^t/t!}$, where $t=\lfloor(s-2)/2\rfloor$.)

\begin{lemma} The complexity of the lower envelope of $n$ angle functions is
$\lambda_{16}(n)$. \end{lemma}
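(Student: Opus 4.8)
The development in the paragraph preceding the lemma already reduces it to a single claim, so the plan is to prove that claim: for a fixed side of a fixed edge $e$ of $\arr$, any two of the $3k$ angle functions cross at most $16$ times. Granting this, the Davenport--Schinzel bound of \cite[\S1.2]{ds} gives at once that the lower envelope of $n$ such functions has complexity $\lambda_{16}(n)$, which is precisely the lemma.

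Since an edge of $\arr$ is a sub-arc of a single Delaunay circle, I would fix a rational parametrization $p(t)=c+r\bigl(\tfrac{1-t^{2}}{1+t^{2}},\,\tfrac{2t}{1+t^{2}}\bigr)$ of that circle, with $t$ real (the single missing point of the circle is irrelevant to a piece count). With the hole $H$ and its $k$ vertices fixed, the $3k$ angles of the triangulation come in two kinds: the $k$ \emph{apex angles} $\angle v_i\,p\,v_{i+1}$ over the edges of $H$, and the $2k$ \emph{vertex angles}, each the angle at some vertex $v_j$ of $H$ between a fixed ray (along an edge of $H$) and the ray $v_j\to p$. The key computation is that $v-p(t)=\vec N_v(t)/(1+t^{2})$ where $\vec N_v$ is a pair of quadratics in $t$; substituting, the cosine of \emph{every} angle $\theta$ in the triangulation becomes $P(t)/\sqrt{D(t)}$ with $D$ a nonnegative polynomial --- for an apex angle, $P=\vec N_{v_i}\!\cdot\vec N_{v_{i+1}}$ (degree $\le 4$) and $D=(\vec N_{v_i}\!\cdot\vec N_{v_i})(\vec N_{v_{i+1}}\!\cdot\vec N_{v_{i+1}})$ (degree $\le 8$), and for a vertex angle the analogous expression has $\deg P\le 2$ and $\deg D\le 4$. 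One should also note that on the relatively open edge $e$ we have $p(t)\ne v_j$ for every vertex $v_j$ of $H$ --- each such $v_j$ lies on at least two Delaunay circles and hence is a vertex, not an interior point, of $\arr$ --- so $D>0$ along $e$ and the formula is valid there.

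Now a crossing of two angle functions $\theta_a,\theta_b$ is a value of $t$ with $\theta_a(t)=\theta_b(t)$; since both are interior angles of nondegenerate triangles, they lie in $(0,\pi)$, on which $\cos$ is a strictly decreasing bijection, so this is equivalent to $P_a(t)\sqrt{D_b(t)}=P_b(t)\sqrt{D_a(t)}$. Squaring, every crossing is a root of
\[
F(t)\;=\;P_a(t)^{2}\,D_b(t)-P_b(t)^{2}\,D_a(t),
\]
and $\deg F\le 8+8=16$, this maximum being attained only when both $\theta_a$ and $\theta_b$ are apex angles (mixed pairs give degree $\le 12$, vertex--vertex pairs degree $\le 8$). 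Hence, as long as $F\not\equiv 0$, there are at most $16$ crossings. The remaining case $F\equiv 0$ forces $\cos^{2}\theta_a\equiv\cos^{2}\theta_b$, so on each interval either $\theta_b\equiv\theta_a$ (the two functions coincide and contribute one piece to the envelope) or $\theta_b\equiv\pi-\theta_a$ (and then crossings occur only where $\cos\theta_a=0$, i.e.\ among the $\le 4$ roots of $P_a$); either way the count stays well below $16$. Feeding $s\le 16$ into \cite[\S1.2]{ds} completes the proof.

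The one genuinely technical point is the computation asserted in the second paragraph: checking that, after the substitution $p=p(t)$, the cosine of each of the three angle types really is of the form $P/\sqrt D$ with a true polynomial $D$ --- in particular that the powers of $1+t^{2}$ arising in numerator and denominator cancel, leaving no stray odd power of $1+t^{2}$ under the radical --- and with the degrees claimed. Everything else --- the bijectivity of $\cos$ on $(0,\pi)$, the squaring step, the degenerate case $F\equiv 0$, and the lone missing point of the rational parametrization --- is routine and does not affect the bound.
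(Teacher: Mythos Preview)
Your proof is correct and reaches the same bound of $16$ crossings, but by a genuinely different route from the paper. The paper works bivariately in the plane: it writes the equality $\angle qpr=\angle spt$ (and the two mixed variants) as an algebraic curve of degree eight in the coordinates $(x,y)$ of $p$, then invokes B\'ezout's theorem against the degree-two circle carrying $e$ to get at most $8\times2=16$ intersections. You instead pull everything down to one variable via the rational parametrization of the circle, express each $\cos\theta$ as $P/\sqrt{D}$, and bound $\deg(P_a^{2}D_b-P_b^{2}D_a)\le 16$ directly. Your approach is more elementary --- it sidesteps B\'ezout entirely --- and in one respect more careful: you explicitly dispose of the degenerate case $F\equiv 0$, which in the paper's language is exactly the case where the degree-eight curve and the circle share a component, a situation B\'ezout does not cover and the paper does not address. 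As a bonus you obtain the finer information that apex--vertex and vertex--vertex pairs cross at most $12$ and $8$ times. One small quibble: your assertion that every vertex $v_j$ of $H$ lies on at least two Delaunay circles can fail for a convex-hull vertex incident to a single Delaunay triangle; but since the paper restricts $p$ to the interior of the convex hull (equivalently, adds the hull-supporting lines as ``infinite'' Delaunay circles to $\arr$), such a $v_j$ sits on the hull boundary and is excluded from the open edge $e$ anyway, so $D>0$ on $e$ and your argument goes through.
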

\begin{proof} There are two kinds of angles to consider: angles at the boundary of $H$, and angles at the new point $p$.  We consider first angles at $p$.  Let $p=(x,y)$, and let $q$ and $r$ be two consecutive vertices of $H$; the coordinates of $q$ and $r$ are fixed.  We are interested in the angle $\angle qpr$ at which $p$ sees the segment~$qr$; see \autoref{fig:curve} (left). 
\begin{figure}
  \centering
  \includegraphics{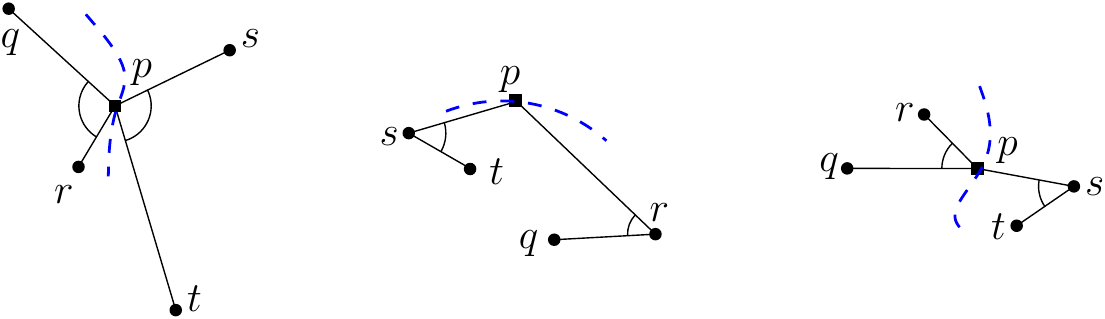}
  \caption{``Artist's impression'' of the curves defined by the three types of angle equality constraints.}
  \label{fig:curve}
\end{figure}
Let $s,t$ be another pair of consecutive vertices.  The angle that $p$ makes with the segment $st$ is $\angle spt$.  Consider the locus of points~$p$ specified by the equation
$\angle qpr = \angle spt$; a point~$p$ satisfying this equation will see $qr$ and $st$ at the same angle; refer to \autoref{fig:curve} (left).  An intersection between this curve and an edge of $\arr$ corresponds precisely to an intersection of the graphs of two angle functions.  Once we prove that there are at most 16 such intersections, we are done.  For convenience, we will equate the cosines of the angles instead of the angles themselves.  Using $|\cdot|$ to denote segment length, the law of cosines gives
$|qr|^2 = |pr|^2 + |pq|^2 - 2|pr||pq|\cos\angle qpr$.  Solving for $\cos\angle qpr$ gives
\[\cos\angle qpr = \frac{|pr|^2 + |pq|^2 - |qr|^2}{2|pr||pq|}.\]
Setting $\cos\angle qpr$ equal to $\cos\angle spt$ produces
\[\frac{|pr|^2 + |pq|^2 - |qr|^2}{|pr||pq|} = \frac{|ps|^2 + |pt|^2 - |st|^2}{|ps||pt|}.\]
After squaring both sides and reshuffling, we obtain
\[(|pr|^2 + |pq|^2 - |qr|^2)^2 |ps|^2 |pt|^2 =(|ps|^2 + |pt|^2 - |st|^2)^2 |pr|^2 |pq|^2.\]
Now each side of the equation is a polynomial in $x$ and $y$ of total degree eight, which means that the locus of points $p$ with $\angle qpr = \angle spt$ is a curve of degree eight.  How many times can such a curve intersect an edge of $\arr$?  An edge is an arc of a circle, which is the zero set of a polynomial of degree two.  According to B\'ezout's  theorem \cite{bez}, the number of proper intersection points is at most the product of the degrees, so there can be at most 16 intersection points.  A similar argument is needed for $\angle qpr = \angle pst$ and also $\angle pqr = \angle pst$ (refer to \autoref{fig:curve} (center and right)), but they also result in polynomial equations of degree at most eight; we omit the entirely analogous calculation.  (In some cases, the degree is only two, but since we are concerned with the worst case, this is little comfort.)  So, the complexity of the envelope is $\lambda_{16}(n)$, and we are done.
\end{proof}

If the worst-case complexity of the lower envelope of $h$~functions from some class is $E(h)$, then we can compute the lower envelope of $n$~functions from that class in $O(E(n)\log n)$ time using a simple divide-and-conquer algorithm \cite[Theorem 6.1]{ds}.  This gives us a running time of $O(\lambda_{16}(n)\log n)$ per arc, which would then make our total running super-cubic if there are a quadratic number of arcs.  However, we are duplicating much work: if we follow a Delaunay circle as it crosses another circle, very little changes when we cross: either one triangle of $T$ ceases to be valid, or else one triangle becomes valid.  (This assumes that we only cross one circle at at time.  At a point of $P$, we may cross many circles at once, so the total change is large, but it is still true that each circle we cross does only one triangle's worth of damage.)  Suppose that a triangle becomes valid when we cross (the other case is symmetric).  Then $H$ loses a boundary vertex, and our triangulation of $H$ loses two old triangles and gains one new one, which means our set of angle functions gains 3 new angles and loses 6 old ones.  The other angle functions remain unchanged.  Instead of restricting the domain of the angle functions to a single arrangement edge, we allow them to be defined wherever the corresponding angle itself exists (so the domain becomes an arc of a Delaunay circle).  On a given arc, there are at most $3n$ functions.  If there are $m$ Delaunay circles, then the boundary of a fixed circle can only have $2(m-1)<2m$ intersections with other circles, and for each of those intersections, at most 6 new functions appear.  The number of circles equals the number of triangles, which is less than $2n$.  Thus for the entire circle, there are at most a linear number of functions ($2n\times2\times6+3n\le27n$).  It is still the case that any pair of function graphs intersect at most 16 times, but because each is not defined over the entire circle, but only a contiguous arc on it, the complexity of the lower envelope can increase slightly, up to $\lambda_{18}(n)$ \cite{ds}.  This is the complexity of an envelope associated with a single circle, and there are $m=O(n)$ circles.  We explicitly compute the $m$ envelopes and find the $m$ associated maxima, $u_1\ldots u_m$.  We also do the region search from the beginning of the section.  The algorithm's final answer is the either the best value that the region search found, or the biggest $u_i$, whichever is larger.  Thus, the running time of the boundary search stage is $O(n\lambda_{18}(n)\log n)$ and the total (expected) running time of our algorithm is dominated by the $O(n^3)$ region search time.  This concludes our description and analysis of the algorithm.

\section{Realistic inputs}\label{sec:realistic}
In the long tradition in computational geometry, exemplified by \cite{realistic}, we would like to be able to analyze our problem in non-worst-case situations.  To this end, we introduce several parameters, besides $n$ that measures the number of input points, that quantify the ``badness'' of the input point set and express the running time of the algorithms in terms of them. 

Consider the arrangement $\arr$ of Delaunay disks of $P$ and let $k$ be its complexity, that is the total number of vertices, edges, and faces; let $d$ be the maximum \emph{depth} of the arrangement, that is the maximum, over all points in the plane, of the number of disks covering the point. In the worst case $k$ is $\Theta(n^2)$ and $d$ is $\Theta(n)$.  In well-behaved point sets, such as those corresponding to uniformly distributed points, $k$ is $\Theta(n)$; one would also
expect $d$ to be near-constant, however, somewhat surprisingly, an unfortunate, but arbitrarily small perturbation of the $\sqrt n\times \sqrt n$ grid can cause~$d$ to be $\Theta(\sqrt n)$, even if we only measure depth within the hull of $P$.  In particular, take the top of the grid, and move the points down slightly so they sit on an upward facing circle.  After a slight perturbation, there are $\sqrt n - 2$ almost identical Delaunay disks which all intersect within the hull of $P$; see \autoref{pt}.
\begin{figure}
  \centering
  \includegraphics{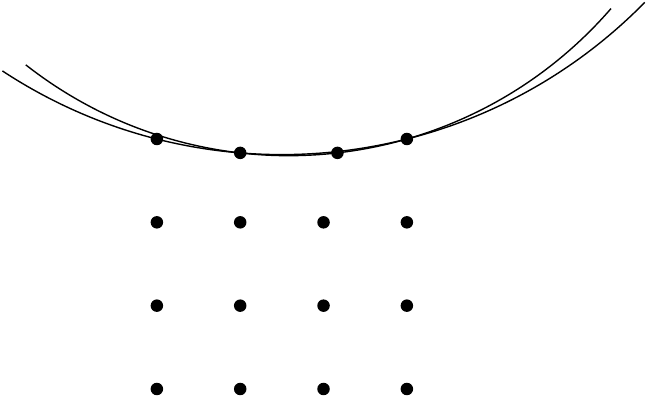}
  \caption{\label{pt}A grid with high depth.}
\end{figure}

We now express the running times in terms of $n$, $k$, and~$d$.  Our algorithm starts by computing the Delaunay triangulation, which can be done in $O(n\log n)$ time.  We then compute the arrangement of circles in $O(k\log n)$ time using a standard sweepline algorithm (better running times are possible using more involved techniques).  Our algorithm and that of \cite{orig} share the first two steps of the outline.  Their analog of the region search runs in time $O(kd^{2+\eps})$, for any positive $\eps$, since for every cell $c \in \arr$, it performs an independent bivariate lower envelope calculation on~$O(d)$ functions, for a total time of $O(kd^{2+\eps} + k\log n)$.  We analyze the region search and the boundary search stages of our proposed algorithm separately.  The region search runs in expected time $O(kd)$, as its bottleneck is solving $k$ LP-type problems of size at most $d$ each.  (Note that this requires that we quickly determine the set of constraints that correspond to a cell.  This is easy to arrange if we traverse the arrangement going from a cell to its immediate neighbor.)

We now turn our attention to the boundary search.  Our analysis here needs stronger general position assumptions than the algorithm itself does.  In particular, we require that if two Delaunay circles intersect in some point not in $P$, no third circle passes through that point.

The running time for one Delaunay circle is affected by how many functions appear on the lower envelope corresponding to that circle.  We earlier derived a bound of $27n$ for the number of functions on a given circle.  We now make this more precise.  Let $f_i$ denote the number of functions along circle $C_i$.  If $C_i$ intersects $x_i$ other circles, and the deepest cell adjacent to $C_i$ has depth $d_i$, then by refining our previous analysis we obtain
$f_i \le 3(d_i+2)+6\cdot 2x_i$.  (If the new point is at depth $d_i$, then the star-shaped hole is composed of $d_i$ triangles and has $d_i+2$ vertices, and the new triangulation will therefore have $d_i+2$ new triangles, and three times as many new angles.  Each time a circle is crossed, six new angles may appear, and each circle is crossed twice.)  Note that since these are Delaunay circles, no disk fully contains another.  Hence, any circle adjacent to a cell of large depth must intersect many other circles.  In particular, $x_i \ge d_i - 1$.  Thus, we have
$f_i \le 3(d_i + 2) + 12x_i \le 3(x_i + 3) + 12x_i = 15x_i+9$, which is $O(x_i)$.

We now show that the sum of $x_i$ over all circles is at most proportional to the arrangement complexity $k$.  Note first that this sum is simply twice the number of pairs of intersecting circles.  Our approach will thus be to show that most pairs of intersecting circles contribute a vertex of degree four to the arrangement $\arr$, that is, a vertex that no third circle passes through.  Indeed, consider a pair of intersecting circles such that both intersection points, call them $q$ and $r$, have degree at least six (in a circle arrangement, all vertices have even degree).  By our stronger general position assumption, both $q$ and $r$ are from the original point set~$P$.  We now have a pair of points with two Delaunay circles passing through it: hence $qr$ must be a Delaunay edge!  But there are only a linear number of such edges, so we are done: all but $O(n)$ pairs of intersecting circles contribute a vertex of degree four to the arrangement, and each such pair can be charged to the vertex.

Finally, let $m$ be the number of circles, $X$ be the number of pairs of intersecting circles, $u$ be the number of vertices of degree four, and $e$ be the number of edges of the Delaunay triangulation.  We now bound the sum of~$x_i$ over all circles:
$\sum_{i=1}^m x_i = 2X \le 2(u + e) = 2u + 2e < 2k + 2e \le 2k + 2(n+m-2) \le
2k + 2(m+2+m-2) = 2k + 4m < 2k + 4k = 6k$,
which is $O(k)$.

Lastly, the total running time of the boundary search stage is at most proportional to
\begin{align*}
\textstyle\sum_{i=1}^m \lambda_{18}(x_i)\log x_i&\le
\textstyle\sum_{i=1}^m \lambda_{18}(x_i)\log m\\&=
\textstyle\log m \cdot \sum_{i=1}^m \lambda_{18}(x_i)\\&\le
\textstyle\log m\cdot\lambda_{18}(\sum_{i=1}^m x_i)\\&\le
\lambda_{18}(6k)\log m,
\end{align*}
which is $O(\lambda_{18}(k)\log n)$.  We thus have the following:
\begin{theorem} Let $P$ be a set of $n$ points in general position, let $k$ be complexity of the arrangement of Delaunay disks induced by $P$, and let $d$ be the maximum depth of this arrangement.  Then the algorithm from the previous section can be implemented to run in time $O(kd + \lambda_{18}(k)\log n)$.
\end{theorem}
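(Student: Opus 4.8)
The plan is to bound the running time of the algorithm stage by stage, re-expressing each cost in terms of $n$, $k$, and~$d$. First, preprocessing: computing the Delaunay triangulation $T$ takes $O(n\log n)$ time, and building the arrangement $\arr$ of the $O(n)$ Delaunay circles by a sweepline takes $O(k\log n)$ time. Since each Delaunay disk contributes at least a constant to the complexity of $\arr$, we have $k=\Omega(n)$, and since $\lambda_{18}(k)\ge k$, both preprocessing terms are dominated by the target bound; I will return to this at the end.

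Next, the region search. For each of the $O(k)$ cells $c$ of $\arr$ we solve the LP-type problem of \autoref{one}, whose number of constraints is the number of triangles of $T$ invalidated by a point in~$c$, i.e.\ the depth of $c$, which is at most $d$; an LP-type problem of constant combinatorial dimension on at most $d$ constraints is solved in expected $O(d)$ time, for an expected $O(kd)$ total. This requires producing the constraint set of each cell cheaply, which we arrange by walking through $\arr$ from a cell to an adjacent cell, updating the set of invalidated triangles by $O(1)$ work per crossed edge, and discarding any returned optimum that falls outside the current cell; by \autoref{one}, the optimum within a cell whose unconstrained optimum lies elsewhere is then found on that cell's boundary, which is the job of the next stage.

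The substantive part is the boundary search. For each Delaunay circle $C_i$ we compute, over the arc of $C_i$, the lower envelope of its $f_i$ angle functions; any two of these cross at most $16$ times, but since each is defined only on a sub-arc of $C_i$, the envelope has complexity $\lambda_{18}(f_i)$ and is built in $O(\lambda_{18}(f_i)\log f_i)$ time by divide-and-conquer. Two estimates remain. \emph{(a) Bounding $f_i$.} If $C_i$ crosses $x_i$ other circles and its deepest incident cell has depth $d_i$, a refined count gives $f_i\le 3(d_i+2)+12x_i$; since no Delaunay disk contains another, a cell of depth $d_i$ incident to $C_i$ forces $x_i\ge d_i-1$, so $f_i\le 15x_i+9=O(x_i)$. \emph{(b) Bounding $\sum_i x_i$.} We have $\sum_i x_i=2X$, where $X$ is the number of intersecting circle pairs, and I would show $X=O(k)$ by a charging argument: for a pair of circles meeting at points $q$ and $r$, if one of $q,r$ has degree $4$ in $\arr$, charge the pair to that vertex, which receives $O(1)$ charges; otherwise both $q$ and $r$ have degree $\ge 6$, so by the strengthened general-position assumption both lie in $P$ and $qr$ is a Delaunay edge --- and there are only $O(n)=O(k)$ of those. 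Hence $X=O(k)$, so $\sum_i x_i=O(k)$.

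Finally I would assemble the pieces. Using $f_i=O(x_i)$, superadditivity of $\lambda_{18}$, and $m=O(n)$,
\[
\sum_{i=1}^m \lambda_{18}(f_i)\log f_i \;\le\; \log m\sum_{i=1}^m \lambda_{18}(O(x_i)) \;\le\; \lambda_{18}\!\Bigl(O\bigl(\textstyle\sum_{i=1}^m x_i\bigr)\Bigr)\log m \;=\; \lambda_{18}(O(k))\log n \;=\; O(\lambda_{18}(k)\log n),
\]
so the boundary search costs $O(\lambda_{18}(k)\log n)$; adding preprocessing, the region search, and the boundary search, and noting that $k=\Omega(n)$ makes the two $O(\cdot\log n)$ preprocessing terms negligible, yields total expected time $O(kd+\lambda_{18}(k)\log n)$. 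The main obstacle is item~(b): it is the one place where a genuinely combinatorial argument beyond bookkeeping is required --- the dichotomy between degree-$4$ arrangement vertices and Delaunay edges --- and it is also precisely where the extra general-position hypothesis is spent.
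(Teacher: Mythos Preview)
Your proposal is correct and follows essentially the same route as the paper: the same $O(kd)$ region-search accounting via cell-to-cell traversal, the same refinement $f_i\le 3(d_i+2)+12x_i$ together with $x_i\ge d_i-1$ to get $f_i=O(x_i)$, and the same charging argument for $\sum_i x_i=O(k)$ (degree-$4$ vertices versus Delaunay edges under the strengthened general-position hypothesis), assembled via superadditivity of $\lambda_{18}$. The only cosmetic difference is that the paper bounds $\log x_i\le\log m$ rather than $\log f_i\le\log m$, and makes the constant in $\sum_i x_i<6k$ explicit.
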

Therefore, our algorithm outperforms (in expectation) that of \cite{orig} for most values of $k$ and $d$.  (If $d$ is a constant, their algorithm would take $O(n\log n + k)$ time if implemented carefully, while our boundary search could take time $O(\lambda_{18}(k)\log n)$, which is slightly worse.)

We can slightly refine the above analysis in another direction: recall that we defined $d$ to be the \emph{maximum} depth of the arrangement $\arr$.  If we let $\bar d$ be the \emph{average} depth, over all the cells, the expected running time of the region search can then be bounded by
$O(k\bar d)$, while the running time of the analogous part of algorithm of \cite{orig} is $O(\sum_{c\in\arr} d_c^{2+\eps})$, where $d_c$ is the depth of cell~$c$; the latter quantity is, roughly, $k$ times the average \emph{squared} depth.  The running time of the boundary search is not easily expressed in terms of $\bar d$, but it is less likely to dominate the running time of our algorithm.

\section{Constrained Delaunay triangulations}\label{sec:cdt}
The algorithm from \cite{orig} is actually designed to process
\emph{constrained Delaunay triangulations}: in addition to a point
set, a set of edges is given, which the resulting triangulation must respect.
We now describe how to modify our algorithm to handle this as well.

The first obvious change is that instead of starting with the Delaunay triangulation, we start with the constrained Delaunay triangulation, which can be computed in $O(n\log n)$ time \cite{cdt}.

The second change to the algorithm is that the arrangement $\arr$ must include not only the constrained Delaunay circles, but also the constrained edges.  With this change, knowing the cell containing a new point gives enough information to determine the star-shaped polygonal hole formed by the invalidated triangles \cite{orig}.  However, it is important to actually compute this information quickly.  (Before it sufficed to do $m$ in-circle tests, which tell you which triangles to eliminate.)  However, if we are willing to spend $O(n\log n)$ time, we can simply insert an artificial point in the cell, compute the constrained Delaunay triangulation from scratch, and then compare the resulting triangulation to the triangulation~$T$ to determine which triangles were invalidated.

It remains to handle the boundary search.  We argued above that, as we trace along a circle, crossing another circle does only one triangle's worth of damage.  Unfortunately, when crossing a constrained edge, it is possible that due to changed visibility, many triangles disappear or reappear.  A simple-minded analysis is as follows: at each constrained edge crossing, at most $m < 2n$ triangles appear or disappear.  A circle can only cross $e<3n$ constrained edges.  Thus, there are at most $6n^2$ triangle ``state changes'' along the whole circle boundary.  This analysis is only a constant factor away from being tight: start with $n/3$ nearly co-circular points, and then add $n/3$ constrained segments that each clip the circles; see \autoref{board}.  Thus there is no longer any sense in doing ``whole-circle-at-once'' processing, and the running time of the boundary search is $O(n^2\lambda_{16}(n)\log n)$ by computing envelopes on each arc of the arrangement of Delaunay circles.  This dominates the total running time of the whole algorithm.  (To figure out which angle functions to take the envelope of, we again simply insert an artificial point on the arc in question and re-triangulate from scratch.)  The boundary search can be sped up slightly to $O(n\lambda_{15}(n)\log n)$ by using the algorithm of Hershberger \cite{Hersh}.  So, we can compute the best place to insert a new point, so as to maximize the smallest angle in a constrained Delaunay triangulation in $O(n^2\lambda_{15}(n)\log n)$ time.
\begin{figure}
  \centering
  \includegraphics{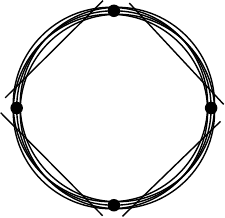}
  \caption{\label{board}Many visibility changes along a circle, for many circles.}
\end{figure}

\section{Conclusions and open problems}\label{sec:concl}
Is there a way to compute which triangles are invalidated in each cell of the arrangement in a way proportional to the depth of that cell in the presence of constraints?  (Our realistic input analysis does not extend to the constrained version of the problem.)

It would be interesting to see if our algorithm can be derandomized using the results of Chazelle and Matou\v sek \cite{derand}; the LP-type problem needs to meet some technical requirements the discussion of which is omitted here.%TO DO EVENTUALLY, MAYBE

Can the unconstrained algorithm be sped up by roughly another order of magnitude by observing that there is generally very little difference between LP-type problems corresponding to adjacent cells of $\arr$?  If we were dealing with actual linear programs instead of LP-type problems, we could use dynamic linear programming; see \cite{dynamic} and subsequent improvements.

Is there any hope of generalizing our approach to multiple Steiner points as in \cite{orig}?

\ifcccg\balance\fi

\begin{thebibliography}{}
\bibitem[ABE99]{quasi-convex} N. Amenta, M. Bern, and D. Eppstein.
	Optimal point placement for mesh smoothing.
	\emph{J. Algorithms} 30(2), 302--322, 1999.
\bibitem[AAF10]{orig} B. Aronov, T. Asano, and S. Funke.	
	Optimal triangulations of points and segments with Steiner points.
	\emph{Int. J. Comput. Geom. Appl.},
	20(1) 89--104, 2010.
\bibitem[AY13]{c3g}
	B. Aronov and M. Yagnatinsky.
	How to place a point to maximize angles.
	\emph{CCCG} 2013, pages 259--263.
\bibitem[BKSV02]{realistic}
	M.~de~Berg, M.~J. Katz, A.~F. van~der Stappen, and J.~Vleugels.
	Realistic input models for geometric algorithms.
	\emph{Algorithmica}, 34:81--97, 2002.
\bibitem[B04]{mesh} M. Bern.  Triangulations and mesh generation.
	In \emph{Handbook of Discrete and Computational Geometry, 2nd Ed.},
	J. E. Goodman and J. O'Rourke, Eds., 563--582.
	CRC Press LLC, Boca Raton, FL, April 2004.

{\raggedright %above line must be blank
\bibitem[B1779]{bez}
	B\'ezout theorem.  Wikipedia.  From
	\url{http://en.wikipedia.org/wiki/B\%C3\%A9zout\%27s\_theorem}; retrieved 11 May 2013.

}%above line must be blank
\bibitem[C89]{cdt} L. P. Chew. Constrained Delaunay triangulations.
	\emph{Algorithmica}, 4(1--4) 97--108, 1989.
\bibitem[CM93]{derand} B. Chazelle and J. Matou\v{s}ek.
	On linear-time deterministic algorithms for optimization problems in fixed dimension.
	\emph{Proc. Fourth Annu. ACM-SIAM Symp. Discr. Algorithms}, pages 281--290, 1993.
\bibitem[E91]{dynamic} D. Eppstein.  Dynamic three-dimensional linear programming.
        \emph{ORSA J. Computing},
        4(4) 360--368, 1992.
% @article{Epp-OJC-92,
%   title = {{Dynamic three-dimensional linear programming}},
%   author = {David Eppstein},
%   journal = {ORSA J. Computing},
%   volume = {4},
%   number = {4},
%   pages = {360--368},
%   year = {1992},
%   month = {Fall},
%   note = {Special issue on computational geometry}}
\bibitem[GS85]{incremental} L. Guibas and J. Stolfi.
  Primitives for the manipulation of general subdivisions and the computation of Voronoi diagrams.
  \emph{ACM Transactions on Graphics}, 4(2) 74--123, 1985.
\bibitem[H89]{Hersh} J. Hershberger.
	Finding the upper envelope of n line segments in $O(n\log n)$ time.
	\emph{Inf. Proc. Letters}, 33(4) 169--174, 1989.
\bibitem[MSW96]{lp} J. Matou\v sek, M. Sharir, and E. Welzl.
  A subexponential bound for linear programming.
  \emph{Algorithmica}, 16(4--5) 498--516, 1996.
\bibitem[P13]{tight}  Seth Pettie. Sharp Bounds on Davenport-Schinzel Sequences of Every Order,
	\emph{SoCG} 2013, pages 319--328.  Full version at \url{http://arxiv.org/abs/1204.1086}.
\bibitem[SA95]{ds} M. Sharir and P. K. Agarwal.
	\emph{Davenport-Schinzel Sequences and Their Geometric Applications}. 1995.
\bibitem[S78]{Delaunay-good} R. Sibson.  Locally equiangular triangulations.
	\emph{The Computer Journal}, 21(3) 243--245, 1978.
\end{thebibliography}
\end{document}
need no polygon
walk around the rooms
radial sweep
%% Do NOT remove.  Spell checking exceptions.

%  LocalWords:  vertices Aronov CCF Yagnatinsky NYU Delaunay quartic
%  LocalWords:  circumcircles bivariate Schinzel Hershberger Chazelle
%  LocalWords:  Agarwal Sharir Urrutia Elsevier Asano Funke Overmars
%  LocalWords:  Springer LLC derandomized Cheong Kreveld Welzl Katz
%  LocalWords:  subexponential Algorithmica equiangular Stappen Raton
%  LocalWords:  Vleugels Sibson